\newtheorem{thm}{Theorem}[section]
\newtheorem{defn}[thm]{Definition}
\def\C{{\mathbb C}}
\def\sCM#1#2{\hbox{sCM}^{#1}_{#2}}
\def\eqref#1{$(\ref{#1})$}
\def\GL{\textup{GL}}
\def\V{\nu}
\def\t{\vec t}
\def\exp{\hbox{exp}}
\def\rank{\textup{rank}\,}
\def\Res#1{\mathop{\textup{Res}}_{z=#1}}
\def\calL{{\mathcal{L}}}
\def\b{\flat}
\def\N{V}
\def\R{\mathcal{R}}
\def\K{\mathcal{K}}
\def\p{{q}}
\def\Left#1{{#1}_{\hbox{\scriptsize\tt L}}}
\def\Right#1{{#1}_{\hbox{\scriptsize\tt R}}}
\def\cc(#1){c_{#1}}
\newcounter{claimct}[section]
\def\theclaimct{\arabic{section}.\arabic{claimct}}
\def\claim#1{\par\medskip\par\noindent\refstepcounter{claimct}\textbf{#1 \theclaimct:}}
\title[Spin Calogero and Bispectral Matrix KP]{Spin Calogero
  Particles and Bispectral Solutions of the Matrix KP Hierarchy}
\author{Maarten Bergvelt}
    \address[Bergvelt] {Department of Mathematics,
University of Illinois, Urbana, IL 61801, USA }
    \email{bergv@uiuc.edu}
    \author{Michael Gekhtman}
    \address[Gekhtman] {Department of Mathematics,
University of Notre Dame, Notre Dame, IN 46556, USA }
    \email{gekhtman.1@nd.edu}
 \author{Alex Kasman}
    \address[Kasman] {Department of Mathematics,
College of Charleston, Char\-le\-ston, SC 29424, USA }
    \email{kasmana@cofc.edu}
\date{\today}
\begin{document}
\begin{abstract}
  Pairs of $n\times n$ matrices whose commutator differ from the
  identity by a matrix of rank $r$ are used to construct bispectral
  differential operators with $r\times r$ matrix coefficients
  satisfying the Lax equations of the Matrix KP hierarchy.  Moreover,
  the bispectral involution on these operators has dynamical
  significance for the spin Calogero particles system whose phase
  space such pairs represent.  In the case $r=1$, this reproduces
  well-known results of Wilson and others from the 1990's relating
  (spinless) Calogero-Moser systems to the bispectrality of (scalar)
  differential operators. This new class of pairs $(L, \Lambda)$ of
  bispectral matrix differential operators is different than those
  previously studied in that $L$ acts from the left, but $\Lambda$
  from the right on a common $r\times r$ eigenmatrix.
\end{abstract}

\maketitle

\section{Introduction}

\subsection{Background}
Let 
\begin{equation}
\hbox{CM}_n=\{(X,Z)\in M_{n\times n}\,|\,\rank([X,Z]-I)=1\}\label{rankone}
\end{equation}
be the set of pairs of complex $n\times n$ matrices whose commutator
differs from the identity by a matrix of rank one.  This space arises
naturally in the study of the integrable Calogero-Moser-Sutherland
particle system \cite{MR1843558, MR2277458}.  In particular, the
eigenvalues of the time dependent matrix $X+i t Z^{i-1}$ move according to the $i$th
Hamiltonian of this integrable hierarchy and even allows the
continuation of the dynamics through collisions \cite{MR0478225, MR1626461}.

The KP hierarchy  is the collection of nonlinear partial differential equations 
\begin{equation}
\label{KPlax}
\frac{\partial}{\partial t_i} \calL=[(\calL^i)_+,\calL],\qquad i=1,2,3.\ldots
\end{equation}
for a monic pseudo-differential operator $\calL$ of order one whose
coefficients are scalar functions depending on the time variables
$t_i$ \cite{MR86m:58072,MR87b:58039}.  If the coefficients of $\calL$
are further assumed to be rational functions of $t_1$ which vanish as
$t_1\to\infty$, then the solutions can be written in terms of the
matrices in $\hbox{CM}_{n}$ and the poles move according to the
dynamics of the Calogero-Moser-Sutherland system
\cite{MR0649926,MR557031,MR1299922,MR1626461}.  This was interpreted
as a special case of a more general relationship between ``rank one
conditions'' and the KP hierarchy in \cite{MR2173898}.

Although it seems at first to be quite different in nature, having no
obvious dynamical interpretation, the bispectral problem
\cite{MR1611018} turns out to be another aspect of this relationship
between the KP hierarchy and the Calogero-Moser-Sutherland particle
system.  As originally formulated in \cite{MR826863}, the bispectral problem
seeks to find scalar coefficient ordinary differential operators $L$
and $\Lambda$ in the variables $x$ and $z$ respectively such that
there is a common eigenfunction $\psi(x,z)$ satisfying the eigenvalue
equations
\begin{equation}
\label{bispdef1}
L\psi =p(z)\psi \qquad \Lambda\psi=\pi(x)\psi
\end{equation}
for non-constant functions $p$ and $\pi$.  As it turns out, if one
additionally requires the operator $L$ to commute with another
ordinary differential operator of relatively prime order, the
solutions to the bispectral problem are exactly the same as the
rational solutions to the KP hierarchy mentioned above \cite{MR1234841}.
(Specifically, up to trivial renormalizations, the bispectral
operators are the ordinary differential operators that commute with
the pseudo-differential operator $\calL$ with the identification
$x=t_1$.)  Moreover, the bispectral property for these operators is a
manifestation of the involution on $\hbox{CM}_n$ given by
$(X,Z)\mapsto(Z^{\top},X^{\top})$ which linearizes the dynamics of the
particle system \cite{MR1350415,MR1626461}.

In \cite{MR1626461} and the conference proceedings \cite{MR1844234},
Wilson suggests that the correspondence should generalize naturally to
the case in which the ``rank one condition'' \eqref{rankone} is
replaced with a ``rank $r$ condition''.  Indeed, various authors have
demonstrated that a similar relationship exists between the matrices
whose commutator differs from the identity by a matrix of rank $r$
($1< r \leq n$), the ``spin generalization'' of the integrable
particle system, and matrix generalizations of the KP hierarchy.  In
particular, the spin generalized system \cite{MR761664} was shown to be
related to the matrix KP equation in \cite{MR1355552} and to the
multi-component KP hierarchy in \cite{MR2377220}, and rather general
rank $r$ conditions were shown to produce solutions to the matrix
potential KP hierarchy in \cite{MR2321659}.  None of these, however, has
specifically addressed the question of whether and how the results
relating to bispectrality generalize to the matrix case.

\subsection{Outline}

Section~\ref{sec:notation} will introduce a version of the bispectral
problem in which matrix differential operators act on a common
eigenmatrix from opposite
sides.  The main result of this paper will be to demonstrate that this
formulation allows for the generalization of the results on
bispectrality to the spin version of the particle system and the
matrix KP hierarchy.

Section~\ref{sec:sCM} introduces the generalization of \eqref{rankone}
to the case of arbitrary rank and relates it to the dynamics of the
spin Calogero particle system.  Special attention is paid to the block
decompositions of the associated operators corresponding to the
generalized eigenspaces of the matrix $Z$.  
% The results in this
% section allow us to prove the main results without making any
% assumptions regarding the Jordan form of $Z$.

A wave function and pseudo-differential operator are constructed from
a choice of $n\times n$ matrices satisfying the rank $r$ condition in
Section~\ref{sec:MKP}.  This $r\times r$ matrix pseudo-differential
operator is shown to satisfy the Lax equation of the KP hierarchy
\eqref{KPlax}.  A key component of the proof is the explicit
construction of an $rn$-dimensional space of finitely supported
distributions in the spectral parameter which annihilate the wave
function.

Several obvious group actions on the space of matrices satisfying the
rank $r$ condition are investigated in Section~\ref{sec:symmetries}
with emphasis on their effect on the corresponding KP solution.  Of
special interest is the bispectral involution which has the effect of
exchanging variables and transposing the wave function.

The main theorem is the construction in
Section~\ref{sec:bispectrality} of commutative rings of matrix
differential operators in $x$ and $z$ and the demonstration that they
have the KP wave function as a common eigenfunction.

A final section contains closing remarks and lists problems for future research on this topic.

 \subsection{Notation and Matrix Bispectrality}\label{sec:notation}
 
 We will make use of the notation $\Left{M}$ and $\Right{M}$ to
 distinguish between the cases in which the operator $M$ is acting
 from the left or the right, respectively.  So, for instance, if $M$
 and $P$ are both $r\times r$ matrices, then
$$
[M,P]=(\Left{M}-\Right{M})P.
$$

Similarly, if
\begin{equation}
L=\sum_{i=0}^N M_i(x) \partial_x^i\label{eq:1}
\end{equation}
is an ordinary differential operator in $x$ of degree $N$ with
coefficients $M_i(x)$ that are $r\times r$ matrices and $\psi(x)$ is
an $r\times r$ matrix function we define
$$
\Left{L}(\psi)=L(\psi)=\sum_{i=0}^N M_i(x)\left(\frac{\partial^i}{\partial x^i} \psi(x)\right),
$$
as usual.   However, the operator can also act from the right 
$$
\Right{L}(\psi)=\sum_{i=0}^N \left(\frac{\partial^i}{\partial x^i} \psi(x)\right)M_i(x).
$$
Equivalently, if we denote by $L^{\top}$ the differential operator
with coefficients $M_i^{\top}$ that are the
ordinary matrix transpose of the coefficients of $L$, we can say
\begin{equation}
\Right{L}(\psi)=\left(L^{\top}(\psi^{\top})\right)^{\top}.\label{eq:RightTranspose}
\end{equation}

\begin{defn} \label{def:bisp}
  A \emph{bispectral triple} $(L,\Lambda,\psi)$ consists of a
  differential operator $L$ in $x$ as in (\ref{eq:1}), a differential
  operator $\Lambda$ in the variable $z$ also having $r\times r$
  matrix coefficients, and an
  $r\times r$ matrix function $\psi(x,z)$ of $x$ and $z$ satisfying
  the equations 
  \begin{equation}
\Left{L}(\psi)=p(z)\psi \qquad \hbox{and}\qquad \Right{\Lambda}(\psi)=\pi(x)\psi,
\label{bispdef2}
\end{equation}
where $p(z)$ and $\pi(x)$ are non-constant, scalar eigenvalues.   
\end{defn}

This seems to be a natural matrix generalization of the scalar
bispectral problem for differential operators considered in
\cite{MR826863}.  However, we note that this differs
from the matrix generalization previously considered by Zubelli
\cite{MR1067913} in which both operators acted
from the same side, and also from the ``bundle bispectrality''
considered by Sakhnovich-Zubelli \cite{MR1857803} where the operators
$L$ and $\Lambda$ were allowed to depend on both variables.  (Here we
are interested only in the case that $L$ is independent of $z$ and
$\Lambda$ is independent of $x$.)

In the rest of the paper we will use the notation $I_{k}$ for the
$k\times k$ identitity matrix. Also we will abuse notation by
using the symbol $I$ to denote the
identity transformation on many different vector spaces whenever its
use should make it clear which is intended.
\section{Spin Calogero Matrices}\label{sec:sCM}

Let $\sCM{r}{n}$ be the the set of 4-tuples of matrices $(X,Z,A,B)$ such that the $n\times n$ commutator $[X,Z]$ differs from the identity by the rank $r$ matrix $BA$:
\begin{equation}
  \label{eq:2}
  \sCM{r}{n}=\{(X,Z,A,B)\mid  X,Z\in M_{n\times n},\ A,B^{\top}\in M_{r\times n},\  [X,Z]-I=BA\not=0\}.
\end{equation}
This space arises naturally in the description of generic initial conditions of the Spin Calogero
particles, as we will see below.  More importantly, the
dynamics linearizes there (when one considers the phase space to be
$\sCM{r}{n}$ modulo the action of $\GL(n)$ to be described in the
section on symmetries).

Let $q_i$ ($1\leq i\leq n$) be the distinct positions of $n$ particles
on the complex plane, $\dot q_i$ be their momenta, and
$f_{ij}=\beta_j\alpha_i $ be their ``spins'' represented as the
products of $n$ column $r$-vectors $\alpha_i$ and $n$ row $r$-vectors
$\beta_j$ subject to the constraint $f_{ii}=-1$.

We associate to this data the matrices $(X,Z,A,B)\in \sCM{r}{n}$ in the form
$$
X_{ij}=q_i \delta_{ij} \qquad Z_{ij}=\dot q_i \delta_{ij} + (1-\delta_{ij})\frac{f_{ij}}{q_i-q_j}
$$
$$
A=
\begin{pmatrix}\alpha_1& \cdots & \alpha_n\end{pmatrix}
\qquad
B=\begin{pmatrix}\beta_1\cr
\vdots\cr
\beta_n\end{pmatrix}.
$$
The dynamics of the eigenvalues of $X+ i t
Z^{i-1}$ are governed by the Hamiltonian $H=\textup{tr} Z^i$.  This is
the \textit{spin Calogero system} \cite{MR761664,MR557031}.  In the
special case $r=1$, this reduces to the more famous (spinless)
Calogero-Moser-Sutherland particle system \cite{MR1843558}.

\subsection{Block Decomposition}\label{sec:block}

For a fixed choice of $(X,Z,A,B)\in\sCM{r}{n}$ we get a decomposition
of $V=\C^n$ into generalized eigenspaces of $Z$:
$$
V=\bigoplus_{\lambda} \N_{\lambda} \qquad
\N_{\lambda}=\{v\in\C^n\,|\, (Z-\lambda I)^kv=0\hbox{ for some
}k\geq0\}.
$$
The restriction of $Z$ to $\N_{\lambda}$ will be denoted by
$$
Z_{\lambda}=\lambda I + N_{\lambda}
$$
where $I$ is the identity operator on $\N_{\lambda}$ and $N_{\lambda}$ is nilpotent.
Rational expressions in $zI-Z_{\lambda}$
below will always be interpreted by
expansion in positive powers of the nilpotent $N_{\lambda}$.

We will utilize subscripts $\lambda$ and $\mu$ which will run over the
eigenvalues of $Z$ to similarly denote the blocks of other linear
operators associated to this decomposition of $\C^n$.  Specifically,
$A_{\lambda}:\N_{\lambda}\to \C^r$ will be the restriction of the map
$A$, $B_{\lambda}:\C^r\to \N_{\lambda}$ will be the map $B$ followed by
projection onto $\N_{\lambda}$, and for a linear operator $M$ from $\C^r$
to itself (such as $X$) $M_{\lambda,\mu}$ will be the block
corresponding to the map from $\N_{\mu}$ to $\N_{\lambda}$.  

The sCM
condition (\ref{eq:2}) involves the commutator
$[X,Z]=(\Right{Z}-\Left{Z})(X)$.  Interestingly, although the operator
$\Right{Z}-\Left{Z}=-\textup{ad}(Z)$ is not invertible, its
``off-diagonal'' action is invertible which allows us to solve for
$X_{\lambda\mu}$ when $\mu\not=\lambda$.

\claim{Lemma}\label{lem:offdiagX}
Let $\lambda\not=\mu$ be generalized eigenvalues of $Z$.  Then
$$
X_{\mu\lambda}=\left(\Right{(Z_{\lambda})}-\Left{(Z_{\mu})}\right)^{-1}
B_{\mu}A_{\lambda}
=\sum_{k\geq0}\frac{(\Right{(N_{\lambda})}-\Left{(N_{\mu})})^k}{(\lambda-\mu)^{k+1}}
B_{\mu}A_{\lambda}.
$$
\begin{proof}
Since
$$(\Right{Z}-\Left{Z})_{\mu\lambda}=(\lambda-\mu)I_{\mu\lambda}+\Right{(N_{\lambda})}-\Left{(N_{\mu})}
$$
differs from the a nonzero multiple of the identity by a nilpotent
matrix, it is invertible.  Specifically, we may invert it in general
using
$$
(\Right{Z}-\Left{Z})_{\mu\lambda}^{-1}=\sum_{k\geq0}\frac{(\Right{(N_{\lambda})}-\Left{(N_{\mu})})^k}{(\lambda-\mu)^{k+1}}.
$$
Applying this when solving  $[X,Z]-I=BA$ for any off-diagonal block of $X$ yields the claimed formula.
\end{proof}

It will later be necessary to evaluate residues of matrix functions
written in terms of the blocks $Z_{\lambda}$.  For this purpose the
following ``obvious'' lemma will be useful. For convenience we
introduce notation for ``divided derivatives'':
\[
f^{[k]}=\frac{1}{k!}\frac{d^{k}f}{dz^{k}}.
\]
\claim{Lemma}\label{lem:matres} Let $f(z)$ be a rational function that
is regular at $z=\lambda$, then
\[
\Res{\lambda}\left(\frac{f(z)}{(zI-Z_{\lambda})^{k+1}}\right)=f^{[k]}(Z_{\lambda}).
\]
\begin{proof}
\begin{align*}
\Res{\lambda}&\left(\frac{f(z)}{(zI-Z_{\lambda})^{k+1}}\right)
=\Res{\lambda}\left(f(z)\frac{(-\partial_{z})^{k}}{k!}\frac{1}{zI-Z_{\lambda}}\right)\\
&=\Res{\lambda}\left(f^{[k]}(z)\frac{1}{zI-Z_{\lambda}}\right)
=\Res{\lambda}\left(f^{[k]}(z)\sum_{s\geq 0}\frac{N_{\lambda}^s}{(z-\lambda)^{s+1}}\right)\\
&= \Res{\lambda}\left(f^{[k]}(z)\sum_{s\geq 0}(-\partial_z)^s/s!
   \frac{N_{\lambda}^s}{(z-\lambda)}\right)
= \Res{\lambda}\left(\left[\sum_{s\geq0}\partial_z^s/s!
   f^{[k]}(z)N_{\lambda}^s\right] \frac{1}{z-\lambda}\right)\\ 
&= \Res{\lambda}\left(f^{[k]}(z+N_{\lambda})\frac
 1{z-\lambda}\right)=f^{[k]}(Z_{\lambda}).
\end{align*}
\end{proof}

Of course, in the above lemma $f^{[k]}(Z_{\lambda})$ is a matrix and may
not commute with other matrices appearing.  So, one needs a little
care in applying the Lemma~\ref{lem:matres}.

\section{Matrix KP Hierarchy}\label{sec:MKP}

Let $\V=(X,Z,A,B)\in\sCM{r}{n}$ and associate to it the \emph{wave
function} $\psi_{\V}$ depending on the spectral parameter $z$ and the
times $\t=(t_1,t_2,t_3,\ldots)$:
\begin{equation}
\label{psidef}
\psi_{\V}(\t,z)=\gamma(\t,z)\left(I_r+A\tilde X^{-1}(z I_n-Z)^{-1} B\right),
\end{equation}
where\footnote{Note that the dependence on $t_i$ in $-\tilde X$ is such that its eigenvalue dynamics are governed by the $i^{th}$ spin Calogero Hamiltonian.}  
$$
\tilde X=\tilde X(\t)=\sum_{i=1}^{\infty}i t_i Z^{i-1}-X,
\qquad\hbox{and}\qquad
\gamma(\t,z)=\exp\left(\sum_{i=1}^{\infty} t_i z^i\right).
$$

% (We note that this formula for $\psi_{\V}$ has already appeared in a
% contribution to a conference proceedings by Wilson \cite{W3}.
% Although that article concerned only the case $r=1$, a remark
% indicated that it could be generalized to arbitrary rank.)

If $\p_{\V}(z)=\det(z I-Z)$ is the characteristic polynomial of $Z$
then the wave function \eqref{psidef} can be multiplied by $\p_{\V}$
and an exponential so as to yield a polynomial in $z$ with
coefficients that are rational in the times
\begin{equation}
K(\t,z)=\gamma^{-1}(\t,z)\psi_{\V}(\t,z)\p_{\V}(z).\end{equation}
Indeed, $q_{\V}(z)(zI-Z)^{-1}$ is the classical adjoint of
$zI_{n}-Z$ is and hence polynomial in $z$.  Letting $\partial=\frac{\partial}{\partial x}$ be the differential
operator in $x=t_1$, we note that the ordinary differential operator
$\K_{\V}=K(\t,\partial)$ satisfies
\begin{equation}
\psi_{\V}(\t,z)=\frac{1}{\p_{\V}(z)} \K_{\V} \gamma(\t,z).
\label{def:K}
\end{equation}

The main goal of this section is to prove that the pseudo-differential operator
$$
\calL_{\V}=\K_{\V}\circ \partial\circ \K_{\V}^{-1}
$$
is a solution to the matrix KP hierarchy in that it satisfies the Lax
equation \eqref{KPlax}.  As in \cite{MR1234841} (see also
\cite{MR1350415,MR87b:58039}), the proof will involve identifying finitely
supported distributions in $z$ that annihilate the function
$\psi_{\V}$.

\subsection{Conditions satisfied by $\psi_{\V}$}

Consider a generalized eigenvalue $\lambda$ of $Z$ with multiplicity
$\ell$ and use the notation of Section~\ref{sec:block} to denote by
$Z_{\lambda}$, $X_{\lambda\mu}$, $A_{\lambda}$, etc. the blocks of the
operators $X$, $Z$, $A$ and $B$.  Let $v\in\C^{r\ell+\ell}$ have the
decomposition
$$
v=\begin{pmatrix} v_0\cr v_1 \cr v_2 \cr \vdots\cr v_{\ell-1} \cr
    w\end{pmatrix}$$ 
where $ v_i\in \C^r$ and $w\in\C^\ell$ and define the
distribution $c_{v,\lambda}$ taking $r\times r$ matrix functions of
$z$ to $r$ component constant vectors by the formula
\begin{equation}
c_{v,\lambda}(f(z))=\Res{\lambda}\left(f(z)\cdot \left( A_{\lambda}(zI-Z_{\lambda})^{-1} w+\sum_{i=0}^{\ell-1} (z-\lambda)^i v_i \right)\right).
\label{eqn:condition}\end{equation}
In this section we will show that there are $r\ell$ linearly
independent distributions of this form satisfying
$c_{v,\lambda}(\psi_{\V}(\t,z))\equiv0$.  Consequently, by running
through all of the eigenvalues of $Z$ we obtain in this manner an
$rn$-dimensional space of conditions satisfied by the wave
function. Indeed, if $\{\lambda_{i}\}$ are the generalized eigenvalues
of the $n\times n$ matrix $Z$ with multiplicities $\{\ell_i\}$, then
$n=\sum \ell_{i}$.

Consider the $r\times (r\ell+\ell)$ matrix
\[
\Gamma_{\lambda}=\begin{pmatrix}B_{\lambda}& N_{\lambda}B_{\lambda}
&
N_{\lambda}^2 B_{\lambda}
&
\cdots
&
N_{\lambda}^{\ell-1} B_{\lambda}
&
-X_{\lambda\lambda}
\end{pmatrix}.
\]

\begin{claim}{Lemma}\label{lem:conditions}
If $v\in \ker\Gamma_{\lambda}$ 
then $c_{v,\lambda}\left(\psi_{\V}(\t,z)\right)=0$ for all values of the variables $\t$.  
\end{claim}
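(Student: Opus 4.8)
The plan is to compute $c_{v,\lambda}(\psi_\V)=\Res{\lambda}\bigl(\psi_\V(\t,z)\,g_v(z)\bigr)$ directly, where
$g_v(z)=A_{\lambda}(zI-Z_{\lambda})^{-1}w+\sum_{i=0}^{\ell-1}(z-\lambda)^i v_i$ is the $\C^r$-valued rational function appearing inside the definition $(\ref{eqn:condition})$ of $c_{v,\lambda}$. Factoring the wave function $(\ref{psidef})$ as $\psi_\V=\gamma\,\Phi$ with $\Phi(z)=I_r+A\tilde X^{-1}(zI_n-Z)^{-1}B$, the whole task is to put the principal part of $\Phi(z)g_v(z)$ at $z=\lambda$ into a normal form and then take its residue against the (regular and nonvanishing) factor $\gamma$ via Lemma~\ref{lem:matres}. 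The obstruction to doing this naively is that $\tilde X^{-1}$ does not respect the decomposition $\C^n=\bigoplus_{\mu}\N_{\mu}$, so at first sight the pole at $\lambda$ spreads over all the blocks.

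First I would simplify $(zI_n-Z)^{-1}B\,g_v(z)$. Writing $B\,g_v=B A_\lambda(zI-Z_\lambda)^{-1}w+B\sum_i(z-\lambda)^i v_i$, the defining relation $[X,Z]-I=BA$ together with $A|_{\N_\lambda}=A_\lambda$ turns the first term into $([X,Z]-I)\eta$ with $\eta:=(zI-Z_\lambda)^{-1}w\in\N_\lambda\subset\C^n$. Since $Z\eta=z\eta-w$, one has $([X,Z]-I)\eta=(zI-Z)X\eta-Xw-\eta$, so multiplying by $(zI_n-Z)^{-1}$ peels off one resolvent factor and yields $X\eta-(zI_n-Z)^{-1}Xw-(zI-Z_\lambda)^{-2}w$. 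To control $\tilde X^{-1}X\eta$ I rewrite $\tilde X^{-1}X=\tilde X^{-1}\bigl(\sum_i it_iZ^{i-1}\bigr)-I$; since $\sum_i it_iZ^{i-1}$ preserves $\N_\lambda$, acting there as $Y_\lambda:=\sum_i it_iZ_\lambda^{i-1}$ (which commutes with $Z_\lambda$), this gives $\tilde X^{-1}X\eta=\tilde X^{-1}(Y_\lambda\eta)-\eta$ with $Y_\lambda\eta\in\N_\lambda$. The surviving $-\eta$, after applying $A$, is $-A_\lambda(zI-Z_\lambda)^{-1}w$, which cancels the leading term of $g_v$; and the $\N_\mu$-components for $\mu\neq\lambda$ produced by $(zI_n-Z)^{-1}Xw$ and by $(zI_n-Z)^{-1}B\sum_i(z-\lambda)^i v_i$ are regular at $\lambda$ and drop out (the block decomposition of Section~\ref{sec:block}, together with Lemma~\ref{lem:offdiagX} for the off-diagonal blocks of $\tilde X$, does this bookkeeping). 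What remains, modulo a function regular at $\lambda$, is
\[
\Phi(z)g_v(z)\;=\;A\,\tilde X^{-1}\Bigl[\,(zI-Z_\lambda)^{-1}\bigl(Y_\lambda w-X_{\lambda\lambda}w+\textstyle\sum_i(z-\lambda)^i B_\lambda v_i\bigr)-(zI-Z_\lambda)^{-2}w\,\Bigr],
\]
where $A\tilde X^{-1}$ is applied to the $\N_\lambda$-valued bracket, viewed inside $\C^n$.

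Finally I would evaluate $\Res{\lambda}(\gamma\,\cdot\,)$ of this normal form term by term using Lemma~\ref{lem:matres}, with $f=\gamma$ times a factor regular at $\lambda$ and keeping track of the order of the non-commuting matrix factors as the remark after that lemma warns. One gets $\Res{\lambda}\bigl(\gamma(z)(zI-Z_\lambda)^{-2}w\bigr)=\gamma^{[1]}(Z_\lambda)\,w=\gamma'(Z_\lambda)\,w$ and, for $R(z):=Y_\lambda w-X_{\lambda\lambda}w+\sum_i(z-\lambda)^i B_\lambda v_i$, $\Res{\lambda}\bigl(\gamma(z)(zI-Z_\lambda)^{-1}R(z)\bigr)=\gamma(Z_\lambda)\bigl(Y_\lambda w-X_{\lambda\lambda}w+\sum_i N_\lambda^i B_\lambda v_i\bigr)$. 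Here is where the hypothesis enters: $v\in\ker\Gamma_\lambda$ is exactly $\sum_i N_\lambda^i B_\lambda v_i=X_{\lambda\lambda}w$, which collapses the second residue to $\gamma(Z_\lambda)Y_\lambda w$. Therefore $c_{v,\lambda}(\psi_\V)=A\tilde X^{-1}\bigl(\gamma(Z_\lambda)Y_\lambda w-\gamma'(Z_\lambda)w\bigr)$, and this vanishes because $\gamma'(z)=\bigl(\sum_i it_iz^{i-1}\bigr)\gamma(z)$ as functions of $z$, so substituting $z=Z_\lambda$ gives $\gamma'(Z_\lambda)=Y_\lambda\gamma(Z_\lambda)=\gamma(Z_\lambda)Y_\lambda$ (functions of a single matrix commute). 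Hence $c_{v,\lambda}(\psi_\V)\equiv0$ for all $\t$.

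The step I expect to be the main obstacle is the middle one: reducing every singular-at-$\lambda$ contribution to $A\tilde X^{-1}$ applied to an $\N_\lambda$-valued function, despite $\tilde X^{-1}$ mixing all the generalized eigenspaces of $Z$. The resolution is the substitution $\tilde X^{-1}X=\tilde X^{-1}\sum_i it_iZ^{i-1}-I$, which trades the unwieldy $X$ for the $Z$-commuting operator $Y_\lambda$ on $\N_\lambda$, together with the careful separation of the $\mu=\lambda$ block from the $\mu\neq\lambda$ blocks. Once that normal form is in place, the residue computation via Lemma~\ref{lem:matres} and the final cancellation coming from $\gamma'=(\log\gamma)'\,\gamma$ are essentially forced.
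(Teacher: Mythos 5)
Your proof is correct: the identity $(zI-Z)^{-1}BA\,\eta = X\eta-(zI-Z)^{-1}Xw-(zI-Z_\lambda)^{-2}w$ for $\eta=(zI-Z_\lambda)^{-1}w\in\N_\lambda$, the substitution $\tilde X^{-1}X=\tilde X^{-1}\bigl(\sum_i it_iZ^{i-1}\bigr)-I$ which cancels the $A_\lambda(zI-Z_\lambda)^{-1}w$ part of the condition, the residue evaluations via Lemma~\ref{lem:matres}, and the final cancellation from $\partial_z\gamma(\t,z)=\bigl(\sum_i it_iz^{i-1}\bigr)\gamma(\t,z)$ all check out, with the hypothesis entering exactly through \eqref{eqn:kergam}. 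Your bookkeeping, however, is genuinely different from the paper's. The paper expands $\psi_{\V}$ block-by-block as in \eqref{eqn:blockpsi}, abbreviates $C_\mu=-\sum_\kappa A_\kappa\tilde X^{-1}_{\kappa\mu}$, evaluates four families of residue contributions separately --- using Lemma~\ref{lem:offdiagX} to convert the $\mu\neq\lambda$ terms into $X_{\mu\lambda}$ and the diagonal block of the sCM relation for the $\mu=\lambda$ term --- and then reassembles everything into $\bigl(A_\lambda+\sum_\mu C_\mu\tilde X_{\mu\lambda}\bigr)\gamma(\t,Z_\lambda)w=0$ via \eqref{CtoA}. You instead keep the sCM relation global, apply it to the $\N_\lambda$-valued vector $\eta$, and push all of the $z$-singular content into a single $\N_\lambda$-local expression acted on by the constant matrix $A\tilde X^{-1}$; the blocks of $\tilde X^{-1}$ are never computed, and your parenthetical appeal to Lemma~\ref{lem:offdiagX} is actually superfluous --- regularity at $z=\lambda$ of the $\mu\neq\lambda$ components together with the $z$-independence of $A\tilde X^{-1}$ already kills those residues. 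What the paper's organization buys is an explicit display of how each block $\tilde X_{\mu\lambda}$ is produced (off-diagonal from $X_{\mu\lambda}$, diagonal from $\partial_z\gamma$ combined with $-X_{\lambda\lambda}$), mirroring the structure used elsewhere; what yours buys is economy and robustness: a single residue computation, no expansion of $\tilde X^{-1}$, and an argument that runs verbatim from the sign convention of \eqref{psidef}, sidestepping the block-expansion sign bookkeeping (note the sign discrepancy between \eqref{eqn:blockpsi} and \eqref{psidef}).
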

\begin{proof}
Note first that $\psi_{\V}(\t,z)$ has the block decomposition
\begin{equation}
\psi_{\V}(\t,z)=\gamma(z,t)\left(I-\sum_{\kappa,\mu} A_{\kappa}(\tilde X^{-1})_{\kappa\mu}(zI-Z_{\mu})^{-1} B_{\mu}\right)\label{eqn:blockpsi}
\end{equation}
where again the sum is taken over all (not necessarily distinct) pairs
of generalized eigenvalues $\kappa$ and $\mu$ of $Z$.

Now, we wish to use Lemma~\ref{lem:matres} to expand the residue in
\eqref{eqn:condition} where $f(z)$ is replaced by
\eqref{eqn:blockpsi}.  It will be convenient to introduce the
abbreviation
$$
C_{\mu}=-\sum_{\kappa}A_{\kappa}\tilde X^{-1}_{\kappa \mu},
$$
so that we have
\begin{equation}
\psi_{\V}(\t,z)=\gamma(\t,z)\left(I+\sum_{\mu} C_{\mu}(zI-Z_{\mu})^{-1}B_{\mu}\right)
\label{eqn:psiwC}
\end{equation}
and
\begin{equation}
A_{\lambda}=-\sum_{\mu}C_{\mu}\tilde X_{\mu\lambda}\label{CtoA}.
\end{equation}

The various contributions to the residue are usefully organized
according to dependence on $C_{\mu}$.  First of all, there is the
contribution independent of $C_{\mu}$.  It is given by
\begin{equation}
\Res{\lambda}\left(\gamma(\t,z)A_{\lambda}(zI-Z_{\lambda})^{-1}w\right)=A_{\lambda}\gamma(\t,Z_{\lambda})w.\label{eq:3}\tag{A}
\end{equation}
Making use of Lemma~\ref{lem:offdiagX} one finds that the
contributions containing $C_{\mu}$ for $\mu\not=\lambda$ are
\begin{equation}
  \sum_{\mu\not=\lambda}\Res{\lambda}\left((zI-Z_{\mu})^{-1}
    B_{\mu}A_{\lambda}(zI-Z_{\lambda})^{-1}w \gamma(\t,z)\right)
  =C_{\mu}\tilde X_{\mu\lambda}\gamma(\t,Z_{\lambda})w.\label{eq:4}\tag{B}
\end{equation}
 Next we turn to the terms involving $C_{\lambda}$.  The first one is
\begin{align}
&
C_{\lambda}\Res{\lambda}\left(\gamma(\t,z)(zI-Z_{\lambda})^{-1}B_{\lambda}
\sum_{i=0}^{\ell-1}(z-\lambda)^iv_i\right)\notag\\
&=
C_{\mu}\sum_{i,j=0}^{\ell-1}\Res{\lambda}\left(\gamma(\t,z)(zI-Z_{\lambda})^{i+1}
N_{\lambda}^i B_{\lambda}(z-\lambda)^jv_j\right)\notag\\
&=C_{\lambda}\gamma(\t,Z_{\lambda})\sum_{i=0}^{\ell-1}N_{\lambda}^i B_{\lambda}v_i.\label{eq:5}\tag{C}
\end{align}
The other term linear in $C_{\lambda}$ is
\begin{align}
  &C_{\lambda}\Res{\lambda}\left(\gamma(\t,z)(zI-Z_{\lambda})^{-1}
B_{\lambda}A_{\lambda}(zI-Z_{\lambda})^{-1}w\right)\notag\\
  =&
  C_{\lambda}\Res{\lambda}\left(\gamma(\t,z)(zI-Z_{\lambda})^{-1}
([X_{\lambda\lambda},Z_{\lambda}]-I)(zI-Z_{\lambda})^{-1}w\right)\notag\\
  =&-C_{\lambda}\Res{\lambda}\left(\gamma(\t,z)(zI-Z_{\lambda})^{-2}w\right)\notag\\
  &+ C_{\lambda} \Res{\lambda}\left(\gamma(\t,z)(zI-Z_{\lambda})^{-1}
([X_{\lambda\lambda},Z_{\lambda}]-z)(zI-Z_{\lambda})^{-1}w\right)\notag\\
  =& -C_{\lambda}\gamma'(\t,Z_{\lambda})w-C_{\lambda}\Res{\lambda}
(\gamma(\t,z)(zI-Z_{\lambda})^{-1}X_{\lambda\lambda}w)\notag\\
  &+ C_{\lambda} \Res{\lambda}(X_{\lambda\lambda}(zI-Z_{\lambda})^{-1}\gamma(\t,z)w)\notag\\
  =& -C_{\lambda}\gamma'(\t,Z_{\lambda})w-C_{\lambda}
\gamma(\t,Z_{\lambda})X_{\lambda\lambda}w+C_{\lambda}X_{\lambda\lambda}\gamma(\t,Z_{\lambda})w\notag\\
  =&C_{\lambda} \tilde
  X_{\lambda\lambda}\gamma(\t,Z_{\lambda})w-C_{\lambda}\gamma(\t,Z_{\lambda})X_{\lambda\lambda}w.
\tag{D}\label{eq:6}
\end{align}
Since $v\in\ker\Gamma_{\lambda}$ is equivalent to the statement
\begin{equation}
\sum_{k=0}^{\ell-1}N_{\lambda}^kB_{\lambda}v_k-X_{\lambda\lambda}w=0,
\label{eqn:kergam}
\end{equation}
we see that $(C)$ cancels against the second term in $(D)$.  So, combining all four terms gives
$$
(A)+(B)+(C)+(D)=A_{\lambda}\gamma(\t,Z_{\lambda})w+\sum_{\mu\not=\lambda}C_{\mu}\tilde X_{\mu\lambda}\gamma(\t,Z_{\lambda})w+C_{\lambda}\tilde X_{\lambda\lambda}\gamma(\t,Z_{\lambda})w.
$$
Applying \eqref{CtoA} shows that this is equal to zero as required.
\end{proof}

\begin{claim}{Lemma}\label{lem:dimensions}
The distributions $c_{v,\lambda}$ for $v\in\ker \Gamma_{\lambda}$ form an $r\ell$-dimensional space.
\end{claim}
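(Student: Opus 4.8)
The plan is to realize $\{c_{v,\lambda}:v\in\ker\Gamma_{\lambda}\}$ as the image of the linear map $v\mapsto c_{v,\lambda}$ restricted to the subspace $\ker\Gamma_{\lambda}\subseteq\C^{r\ell+\ell}$, so that its dimension equals $\dim\ker\Gamma_{\lambda}$ minus $\dim(\ker\Gamma_{\lambda}\cap\ker(v\mapsto c_{v,\lambda}))$. Two inputs then finish the count: that $\Gamma_{\lambda}$ has full rank $\ell$, giving $\dim\ker\Gamma_{\lambda}=(r\ell+\ell)-\ell=r\ell$; and that $v\mapsto c_{v,\lambda}$ is injective on $\ker\Gamma_{\lambda}$.

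First I would identify the kernel of $v\mapsto c_{v,\lambda}$. Expanding $(zI-Z_{\lambda})^{-1}=\sum_{s\geq0}N_{\lambda}^{s}(z-\lambda)^{-s-1}$ and using $N_{\lambda}^{\ell}=0$, the vector appearing inside the residue in \eqref{eqn:condition} is the Laurent polynomial $\sum_{s=0}^{\ell-1}A_{\lambda}N_{\lambda}^{s}w\,(z-\lambda)^{-s-1}+\sum_{i=0}^{\ell-1}v_{i}(z-\lambda)^{i}$. As the $c_{v,\lambda}$ are functionals on $r\times r$ matrix functions that may have poles at $\lambda$ (such as $\psi_{\V}$), pairing $c_{v,\lambda}$ against test functions $h(z)E_{ij}$ with $h$ an arbitrary scalar rational function and $E_{ij}$ a matrix unit isolates each Laurent coefficient; hence $c_{v,\lambda}\equiv0$ if and only if $v_{0}=\cdots=v_{\ell-1}=0$ and $A_{\lambda}N_{\lambda}^{s}w=0$ for $0\leq s\leq\ell-1$.

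The core of the argument is a pair of dual vanishing statements, both read off from the block form of the sCM condition. Since $Z$ preserves every generalized eigenspace, $[X,Z]_{\lambda\lambda}=[X_{\lambda\lambda},Z_{\lambda}]=[X_{\lambda\lambda},N_{\lambda}]$, so the $\lambda\lambda$-block of $[X,Z]-I=BA$ reads $[X_{\lambda\lambda},N_{\lambda}]=I+B_{\lambda}A_{\lambda}$. Combining this identity with the nilpotence of $N_{\lambda}$ I would prove: (a) a row covector $\phi$ with $\phi N_{\lambda}^{k}B_{\lambda}=0$ for all $k$ and $\phi X_{\lambda\lambda}=0$ must vanish; and (b) a vector $w$ with $X_{\lambda\lambda}w=0$ and $A_{\lambda}N_{\lambda}^{s}w=0$ for all $s$ must vanish. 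For (a), multiply $[X_{\lambda\lambda},N_{\lambda}]=I+B_{\lambda}A_{\lambda}$ on the left by $\phi N_{\lambda}^{s}$ and use $\phi N_{\lambda}^{s}B_{\lambda}=0$ to obtain $\phi N_{\lambda}^{s+1}X_{\lambda\lambda}=\phi N_{\lambda}^{s}X_{\lambda\lambda}N_{\lambda}-\phi N_{\lambda}^{s}$; since $\phi X_{\lambda\lambda}=0$, a short induction gives $\phi N_{\lambda}^{s}X_{\lambda\lambda}=-s\,\phi N_{\lambda}^{s-1}$ for $s\geq1$, and evaluating at $s=m+1$ with $m$ the largest index for which $\phi N_{\lambda}^{m}\neq0$ forces $\phi N_{\lambda}^{m}=0$, a contradiction. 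Statement (b) is the mirror image: multiply on the right by $N_{\lambda}^{s}w$, use $A_{\lambda}N_{\lambda}^{s}w=0$, deduce $X_{\lambda\lambda}N_{\lambda}^{s}w=s\,N_{\lambda}^{s-1}w$ by induction, and invoke nilpotence again.

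To assemble the pieces: statement (a) says precisely that the columns of $\Gamma_{\lambda}$ span $\N_{\lambda}$, i.e.\ $\rank\Gamma_{\lambda}=\ell$, so $\dim\ker\Gamma_{\lambda}=r\ell$. For injectivity, if $v\in\ker\Gamma_{\lambda}$ also lies in $\ker(v\mapsto c_{v,\lambda})$, then $v_{0}=\cdots=v_{\ell-1}=0$, so \eqref{eqn:kergam} collapses to $X_{\lambda\lambda}w=0$, while $A_{\lambda}N_{\lambda}^{s}w=0$ for all $s$; statement (b) forces $w=0$, hence $v=0$. Therefore $\{c_{v,\lambda}:v\in\ker\Gamma_{\lambda}\}$ has dimension $\dim\ker\Gamma_{\lambda}=r\ell$. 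I expect the real work to be isolating (a) and (b) — recognizing that the single commutator identity together with nilpotence of $N_{\lambda}$ yields both — after which the remainder is dimension bookkeeping; the one point needing care is being precise about the class of (possibly singular at $\lambda$) test functions on which the $c_{v,\lambda}$ act, which is what makes ``all Laurent coefficients vanish'' the right criterion for $c_{v,\lambda}\equiv0$.
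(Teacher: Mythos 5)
Your proposal is correct and follows essentially the same route as the paper: compute $\dim\ker\Gamma_{\lambda}=r\ell$ by showing the left null space of $\Gamma_{\lambda}$ is trivial, and show $v\mapsto c_{v,\lambda}$ is injective on $\ker\Gamma_{\lambda}$ by reducing to a vector $w$ with $X_{\lambda\lambda}w=0$ and $A_{\lambda}N_{\lambda}^{s}w=0$ and forcing $w=0$ via nilpotence. The only cosmetic difference is that you obtain the key relations $X_{\lambda\lambda}N_{\lambda}^{k}w=kN_{\lambda}^{k-1}w$ (and their left-handed analogue) by direct induction on the block identity $[X_{\lambda\lambda},N_{\lambda}]=I+B_{\lambda}A_{\lambda}$, whereas the paper extracts them from the resolvent identity $X_{\lambda\lambda}(zI-Z_{\lambda})^{-1}w=(zI-Z_{\lambda})^{-2}w$ by matching Laurent coefficients; your explicit treatment of which test functions detect the Laurent coefficients makes precise a step the paper labels as clear.
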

\begin{proof}
Note that the map $\Omega:v\in\ker \Gamma_{\lambda}\mapsto c_{v,\lambda}$ is itself a linear map.   What we need to prove, therefore is that
$$
\dim\ker\Gamma_{\lambda}-\dim\ker\Omega=r\ell.
$$

% Since $\Gamma_{\lambda}$ has $m$ rows and $rm+m$ columns, we know
% that $\dim\ker\Gamma_{\lambda}\geq rm$.

A vector $v$ clearly does not lie in the kernel of $\Omega$ if
$v_i\not=0$ for any $i$.  The dimension of the kernel of $\Omega$ is
therefore equal to the dimension of the space of vectors $w$ with the
property that $X_{\lambda\lambda}w=0$ and
$A_{\lambda}(zI-Z_{\lambda})^{-1}w=0$.
In fact, we will show that the only such $w$ is the zero vector (and
hence that $\dim\ker\Omega=0$).  Beginning with the fact that
$$
[(zI-Z_{\lambda}),X_{\lambda\lambda}]-B_{\lambda}A_{\lambda}=I.
$$
Multiplying by $(zI-Z_{\lambda})^{-1}$ on the right, applying both
sides of the resulting equation to $w$ and then multiplying by
$(zI-Z_{\lambda})^{-1}$ on the left gives us that
$$
X_{\lambda\lambda}(zI-Z_{\lambda})^{-1}w=(zI-Z_{\lambda})^{-2}w.
$$
Expanding both sides of this equation in terms of powers of
$(z-\lambda)$ and equating like powers gives us that
$$
X_{\lambda\lambda}N_{\lambda}^kw=kN_{\lambda}^{k-1}w,\quad\text{for $k>0$}.
$$
Since $N_{\lambda}$ is nilpotent, for a sufficiently large $k$ the
left-hand side is equal to zero.  But the equation then tells us that
$N_{\lambda}^{k-1}w$ is then also equal to zero, which again means
that the left hand side would be zero for a smaller value of $k$.
Repeating this process until $k=1$ we find that $w=0$.

A similar argument shows that $\dim\ker\Gamma_{\lambda}=r\ell$.
Considering instead the vectors $w$ such that
$w^{\top}\Gamma_{\lambda}=0$ implies that
$w^{\top}X=w^{\top}(zI-Z_{\lambda})^{-1}B_{\lambda}=0$ and the same
process reveals that $w=0$ so that $\Gamma_{\lambda}$ has rank $\ell$.
Consequently, its kernel has dimension $(r\ell+\ell)-\ell=r\ell$.
\end{proof}

\subsection{The Kernel of $\K_{\V}$}

The results of the previous section on the distributions annihilating
$\psi_{\V}$ give us information about the kernel of the matrix
ordinary differential operator $\K_{\V}$ defined in \eqref{def:K}:

\claim{Corollary}\label{cor1} Let $c_{v,\lambda}$ be as in
Lemma~\ref{lem:conditions}.  Then the  $r$ component vector valued function
\begin{equation}
\phi_{v,\lambda}(\t)=c_{v,\lambda}\left(\frac{\gamma(\t,z)}{\p_{\V}(z)}\right)\label{phiform}
\end{equation}
is in the kernel of the operator $\K_{\V}$. 

\begin{proof}
Since $c_{v,\lambda}$ commutes with multiplication and differentiation in $x=t_1$, we have
$$
\K_{\V}\phi_{v,\lambda}=c_{v,\lambda}\left(\K_{\V} \gamma(\t,z)q_{\V}^{-1}(z)\right)=c_{v,\lambda}(\psi_{\V})=0,
$$
by \eqref{psidef} and Lemma \ref{lem:conditions}.
\end{proof}

In fact, the entire kernel of $\K_{\V}$ is spanned by functions of
this form, and as a consequence they satisfy certain useful linear
differential equations.

\claim{Corollary}\label{vaccor} If $\phi(\t)$ is a vector in the
kernel of $\K_{\V}$ then it is a linear combination of the
$\phi_{v,\lambda}(\t)$, and so satisfies the equation
$$
\frac{\partial^k}{\partial t_1^k} \phi(\t)=\frac{\partial}{\partial t_k}\phi(\t).
$$
\begin{proof}
  By making use of all of the eigenvalues of $Z$, Corollary~\ref{cor1}
  gives us $rn$ linearly independent vector functions in the kernel of
  the $n$th order ordinary differential operator with $r\times r$
  matrix coefficients.  Since they are linearly independent (those
  corresponding to the same eigenvalue are linearly independent by
  Lemma~\ref{lem:dimensions} and those corresponding to different
  eigenvalues cannot be linearly dependent due to the factor of
  $e^{x\lambda}$) this accounts for the entire kernel of $\K_{\V}$.

Note that $\gamma(\t,z)$ trivially satisfies 
$$
\frac{\partial^k}{\partial t_1^k} \gamma(\t,z)=z^k\gamma(\t,z)=\frac{\partial}{\partial t_k}\gamma(\t,z).
$$
Now, the proof here is elementary because differentiation in $t_i$ commutes with the residue, multiplication by functions of $z$ and matrix multiplication in the definition of $\phi_{v,\lambda}$ and applies by linearity to the entire kernel.
\end{proof}

\subsection{The Lax Equation} 
Now we come to main point of this section. If the 
$\V$ moves according the spin Calogero dynamics the wave function
$\psi_{\V}$ depends on the time variables $\t$, and this produces a
solution of the matrix KP hierarchy. More precisely:

\claim{Theorem}\label{thm:lax} The pseudo-differential operator $\calL_{\V}=\K_{\V}\circ \partial \circ \K_{\V}^{-1}$ satisfies
$$
\frac{\partial}{\partial t_i}\calL_{\V} =[(\calL_{\V}^i)_+,\calL_{\V}].
$$

\begin{proof}
  First, we note that the (pseudo)-differential operator
  $(\calL_{\V}^i)_-\circ \K_{\V}$ is actually a 
  \emph{differential} operator since
$$
(\calL_{\V}^i)_-\circ \K_{\V} + (\calL_{\V}^i)_+\circ \K_{\V} = \K_{\V}\circ \partial^i
$$
and therefore $(\calL_{\V}^i)_-\circ \K_{\V} =-(\calL_{\V}^i)_+\circ
\K_{\V} +\K_{\V}\circ \partial^i$.

Now, let $\phi(x)$ be a vector function in the kernel of the operator
$\K_{\V}$.  Then applying $\frac{\partial}{\partial t_i}$ to the
equality $\K_{\V}\phi=0$ and using Corollary~\ref{vaccor} we find
\begin{eqnarray*}
0&=& \frac{\partial}{\partial t_i}\circ \K_{\V}(\phi)=(\K_{\V})_{t_i}(\phi)+\K_{\V}(\phi_{t_i})\\
&=&( \K_{\V})_{t_i}(\phi)+\K_{\V} (\partial^i \phi)\\
&=& (\K_{\V})_{t_i}\phi+\calL_{\V}^i\circ\K_{\V}(\phi)\\
&=& (\K_{\V})_{t_i}(\phi)+(\calL_{\V}^i)_+\circ \K_{\V}(\phi)+(\calL_{\V}^i)_-\circ \K_{\V}(\phi)
\end{eqnarray*}
However, since $\phi$ is in the kernel of $\K_{\V}$ we know that
$(\calL_{\V}^i)_+\circ \K_{\V}(\phi)=0$.  Then the last displayed
equality gives us that the entire kernel of $\K_{\V}$ is in the kernel
of the ordinary differential operator
$(\K_{\V})_{t_i}+(\calL_{\V}^i)_-\circ \K_{\V}$.  Since this operator
has order strictly less than $n$, it can only have such a large kernel
if it is the zero operator and we conclude
$$
(\K_{\V})_{t_i}=-(\calL^i)_-\circ \K_{\V}.
$$

Using this we find that
\begin{eqnarray*}
(\calL_{\V})_{t_i}&=& (\K_{\V})_{t_i}\circ \partial \circ \K_{\V}^{-1}-\K_{\V}\circ \partial \circ \K_{\V}^{-1}\circ (\K_{\V})_{t_i}\circ K^{-1}\\
&=& -(\calL_{\V}^i)_- \circ \K_{\V} \circ \partial \circ \K_{\V}^{-1}+\K_{\V}\circ \partial\circ \K_{\V}^{-1}\circ (\calL_{\V}^i)_- \circ \K_{\V} \circ \K_{\V}^{-1}\\
&=& [\calL_{\V},(\calL_{\V}^i)_-]=[(\calL_{\V}^i)_+,\calL_{\V}].
\end{eqnarray*}

\end{proof}

\section{Symmetries}\label{sec:symmetries}

The symmetry $X\mapsto X+cZ^j$ of $\sCM{r}{n}$ induces the integrable
dynamics of both the particle system and the wave equations of the
matrix KP hierarchy.  Here are some other symmetries and how they
affect the KP Lax operator.

\subsection{Action of $\GL(n)$}\label{sec:GLn}

For any $G\in\GL(n)$ we define $S_G$ acting on $\sCM{r}{n}$ by
\begin{eqnarray*}
S_G: \sCM{r}{n}&\to&\sCM{r}{n}\\
(X,Z,A,B)&\mapsto& (GXG^{-1},GZG^{-1},AG^{-1},GB).
\end{eqnarray*}
The Lax operator $\calL_{\V}$ is unaffected by this action: $\calL_{S_G(\V)}=\calL_{\V}$ .

Since this symmetry does not affect the corresponding dynamical
objects from the previous sections, it makes sense to consider
$\sCM{r}{n}$ \textit{modulo} this group action as the phase space of
the spin Calogero particle dynamics as well as the corresponding
matrix KP solutions.

\subsection{Action of $\GL(r)$}

Clearly, if we have $G\in\GL(r)$ then we can conjugate solutions to
the matrix KP hierarchy to get solutions that are technically
different, but not very different.  This also manifests itself as a
group action on the level of $\sCM{r}{n}$.  Let $G\in\GL(r)$ then if
\begin{eqnarray*}
s_G:\sCM{r}{n}&\to&\sCM{r}{n}\\
(X,Z,A,B)&\mapsto&(X,Z,GA,BG^{-1})
\end{eqnarray*}
one finds that $\calL_{s_G(\V)}=G\calL_{\V}G^{-1}$.

\subsection{Changing $r$}

There is an easy way to take an $r\times r$ solution and turn it into
an $R\times R$ solution for $r<R$. 
Let $a$ be an $R\times r$ and $b$ an $r\times R$ matrix such that
$ba=I$ is the $r\times r$ identity matrix.  Defining
$U_{a,b}:\sCM{r}{n}\to\sCM{R}{n}$ by
$$
U_{a,b}(X,Z,A,B)=(X,Z,aA,Bb).
$$
one then has
$$
\calL_{U_{a,b}(\V)}(\t)=a\calL_{\V}(\vec t)b.
$$

\subsection{The Bispectral Involution}\label{sec:bisp}

Finally, we have an important discrete symmetry whose effect on the KP solution
will be the subject of the next section:
\begin{eqnarray*}
\b:\sCM{r}{n}&\to&\sCM{r}{n}\\
\V=(X,Z,A,B)&\mapsto&\V^{\b}=(Z^{\top},X^{\top},B^{\top},A^{\top})
\end{eqnarray*}

The significance of this symmetry on the KP solution is most easily
seen by looking at the wave function $\psi_{\V}$ as a function of
$x=t_1$ and $z$ only (setting all of the other times equal to zero).
As in the case $r=1$, it involves an exchange of $x$ and $z$, but when
$r>1$ one must also take the transpose of the function:
\begin{equation}
\psi_{\V}(x,z)=\psi_{\V^{\b}}^{\top}(z,x).
\label{xz-switch}
\end{equation}

\section{Bispectrality}\label{sec:bispectrality}

Let $\V\in\sCM{r}{n}$ and $\p_{\V}(z)=\det(zI-Z)$.  In this section,
since the dynamics are not significant, we will consider $x=t_1$ and
$t_i=0$ for $i>1$.  Thus, for instance, we will write $\psi_{\V}(x,z)$
for $\psi_{\V}((x,0,0,\ldots),z)$ and $ \gamma(x,z)=e^{xz}.  $
In the next section we will associate two commutative rings of
ordinary differential operators (one acting in $x$ and one acting in
$z$) to the choice of $\V\in\sCM{r}{n}$ and then in the following
section we will demonstrate that $\psi_{\V}(x,z)$ is a common
eigenfunction for the operators in the rings.  In particular, any operator from each of the rings along with $\psi_{\V}$ form a bispectral triple as in Definition~\ref{def:bisp}.

\subsection{Commutative Rings of Matrix Differential Operators}

Associate to $\V\in\sCM{r}{n}$ the ring $R_{\V}\subset \C[z]$ defined
by the property that the polynomials preserve the conditions
annihilating $\psi_{\V}(\t,z)$ from Lemma~\ref{lem:conditions}:
$$
R_{\V}=\left\{p\in\C[z]\,|\,
c_{v,\lambda}\left(\psi_{\V}\right)=0\Rightarrow c_{v,\lambda}\left(p(z)\psi_{\V}\right)=0\right\}.
$$

\claim{Lemma}\label{lem:nontrivring} The ring $R_{\V}$ is non-empty.  In particular,
$$
\p_{\V}^2(z)\C[z]\subset R_{\V}.
$$
\begin{proof}
  Note that $\p_{\V}(z)\psi_{\V}(\t,z)=\K_{\V}\gamma(\t,z)$ is
  non-singular in $z$.  Then the claim follows from the fact that
  $c_{v,\lambda}(\p_{\V}(z) f(z))=0$ for any non-singular function
  $f$.
\end{proof}
 
By substituting the pseudo-differential operator $\calL_{\V}$ into
these polynomials, we associate a commutative ring of
pseudo-differential operators 
$$\R_{\V}=\left\{p(\calL_{\V})|p\in
  R_{\V}\right\}$$ 
to $\V$.  However, as the next lemma demonstrates,
these are in fact differential operators.

\claim{Lemma}\label{odolem}   
If $p\in R_{\V}$ then $L=p(\calL_{\V})$ is a differential operator (as opposed to a general pseudo-differential operator) satisfying the eigenvalue equation
$$
L\psi_{\V}(x,z)=p(z)\psi_{\V}(x,z).
$$

\begin{proof}
  Since the leading coefficient of $\K_{\V}$ is a nonsingular matrix
  (in fact, it is the identity matrix because of the form of
  $\psi_{\V}$), it is sufficient to show that the kernel of $\K_{\V}$
  is contained in the kernel of $\K_{\V}\circ p(\partial)$ because
  then we know that this ordinary differential operator factors as
  $L\circ \K_{\V}$ for some ordinary differential operator $L$ which
  meets all of the other criteria.

  So, now let $\phi(x)$ be a function in the kernel of $\K_{\V}$.  By
  Lemma~\ref{cor1} we know that $\phi(x)$ is a linear combination of
  functions of the form \eqref{phiform}.  However,
$$
\K_{\V}\circ p(\partial)c_{v,\lambda}(\gamma(x,z)q_{\V}^{-1}(z))
=
c_{v,\lambda}(p(z)\psi_{\V}(x,z))
=0
$$
by \eqref{def:K} and the definition of $R_{\V}$.  
\end{proof}

We will also associate a commutative ring of ordinary differential
operators in $z$ to $\V$.  Applying the procedure above to the point
$\V^{\b}=(Z^{\top},X^{\top},B^{\top},A^{\top})\in\sCM{r}{n}$ we have
another commutative ring $\R_{\V^{\b}}$ of differential operators in
$x$.  We convert them to differential operators in $z$ by simply
replacing $x$ with $z$, $\partial_x$ with $\partial_z$ and transposing
the coefficients:
$$
\R^{\b}_{\V}=\left\{ L^{\top}(z,\partial_z) | L(x,\partial_x)\in\R_{\V^{\b}}\right\}.
$$

\subsection{Common Eigenfunction}

Our main result is the observation that $\psi_{\V}(x,z)$ is a common
eigenfunction for the differential operators in the rings $\R_{\V}$
and $\R^{\b}_{\V}$ satisfying eigenvalue equations of the form
\eqref{bispdef2}: \claim{Theorem} Let $p\in R_{\V}$ and $\pi\in
R_{\V^{\b}}$, then there exist ordinary differential operators
$L(x,\partial_x)\in\R_{\V}$ and $\Lambda(z,\partial_z)\in\R^{\b}_{\V}$
such that
$$
\relax{L}\psi_{\V}(x,z)=p(z) \psi_{\V}(x,z)\qquad
\hbox{and}
\qquad
\Right{\Lambda}\psi_{\V}(x,z)=\pi(x)\psi_{\V}(x,z).
$$
\begin{proof}
The first equation follows from Lemma~\ref{odolem}.  Similarly, it follows from Lemma~\ref{odolem} that
there is a differential operator $Q(x,\partial_x)$ with the property that
$$
Q\psi_{\V^{\b}}(x,z)=\pi(z)\psi_{\V^{\b}}(x,z).
$$
Exchanging the roles of $x$ and $z$ in this equation, taking the
transpose (see ), and applying \eqref{eq:RightTranspose} and
\eqref{xz-switch} results in the second equation of the claim.

 \end{proof}
 
\section{Example}

For the sake of clarity, we briefly illustrate the main ideas with an
example.

Consider $\V=(X,Z,A,B)\in\sCM{2}{3}$ where
$$
X=\left(
\begin{matrix}
 0 & 0 & 0 \cr
 -1 & 0 & -1 \cr
 1 & 0 & 2
\end{matrix}
\right)
\qquad
Z=\left(
\begin{matrix}
 0 & 1 & 0 \cr
 0 & 0 & 0 \cr
 0 & 0 & 0
\end{matrix}
\right)
$$
$$
A=
\left(
\begin{matrix}
0 & 1 & 0 \cr
 0 & 0 & 1
\end{matrix}
\right)
\qquad
B=\left(
\begin{matrix}
 0 & 1 \cr
 -2 & 0 \cr
 1 & -1
\end{matrix}
\right).
$$
Then
$$
\psi_{\V}(x,z)=e^{xz}\left(I + 
\left(
\begin{matrix}
 \frac{-2 z x^2+3 z x+2 x-2}{(x-2) x^2 z^2} & \frac{1}{(x-2) x^2 z} \cr
 \frac{x z-2}{(x-2) x z^2} & \frac{1-x}{(x-2) x z}
\end{matrix}
\right)
\right).
$$
This can be written as
$\psi_{\V}=\K_{\V}e^{xz}/\p_{\V}(z)$ where
$
\p_{\V}(z)=\det(z I - Z)=z^3
$
and
$$
\K=
\left(
\begin{matrix}
 \frac{2 (x-1)}{(x-2) x^2} & 0 \cr
 -\frac{2}{(x-2) x} & 0
\end{matrix}
\right)\partial+\left(
\begin{matrix}
 \frac{3-2 x}{(x-2) x} & \frac{1}{(x-2) x^2} \cr
 \frac{1}{x-2} & \frac{1-x}{(x-2) x}
\end{matrix}
\right)\partial^2+\left(
\begin{matrix}
 1 & 0 \cr
 0 & 1
\end{matrix}
\right)\partial^3
$$
is an ordinary differential operator.

To find the conditions satisfied by $\psi_{\V}(x,z)$, we note that
the kernel of 
$$
\Gamma_0=
\left(
\begin{matrix}
 0 & 1 & -2 & 0 & 0 & 0 & 0 & 0 & 0 \cr
 -2 & 0 & 0 & 0 & 0 & 0 & 1 & 0 & 1 \cr
 1 & -1 & 0 & 0 & 0 & 0 & -1 & 0 & -2
\end{matrix}
\right)
$$
is made up of vectors of the form
$$
v=\left(
\begin{matrix}
 \cc(1) & \cc(2) & \frac{\cc(2)}{2} & \cc(4) &
 \cc(5) & \cc(6) &
   3 \cc(1)+\cc(2) & \cc(8) &
 -\cc(1)-\cc(2)
\end{matrix}
\right)^{\top}.
$$
Hence, we conclude that $c_{v,0}(\psi_{\V}(x,z))=0$ where
$$
c_{v,0}(f(z))=\Res{0}\left(f(z)\left(
\begin{matrix}
 \cc(5) z^2+\frac{1}{2} \cc(2)
 z+\cc(1)+\frac{\cc(8)}{z} \cr \cc(6)
   z^2+\cc(4)
 z+\cc(2)+\frac{-\cc(1)-\cc(2)}{z}
\end{matrix}
\right)\right)
.
$$

Moreover, $c_{v,0}(p(z)\psi_{\V})=0$ whenever $p\in z^4\C[z]=R_{\V}$.
Consequently, we can find an ordinary differential operator having any
of these polynomials as its eigenvalue.  In particular, solving
$$
\K_{\V}\circ \partial^4= L \circ \K_{\V}
$$
for $L$ we find
\begin{eqnarray*}
L &=& 
\left( \begin{matrix}
 -\frac{8 \left(3 x^4-18 x^3+50 x^2-63 x+30\right)}{(x-2)^4 x^4} & \
\frac{4 \left(23 x^3-93 x^2+138 x-72\right)}{(x-2)^4 x^5} \cr
 \frac{8 \left(2 x^3-5 x^2+9 x-6\right)}{(x-2)^4 x^3} & -\frac{4 \
\left(2 x^4-8 x^3+27 x^2-42 x+24\right)}{(x-2)^4 x^4}
\end{matrix}
\right)\\&&
+\left(
\begin{matrix}
 \frac{4 \left(6 x^3-27 x^2+49 x-30\right)}{(x-2)^3 x^3} & \
-\frac{4 \left(13 x^2-35 x+26\right)}{(x-2)^3 x^4} \cr
 -\frac{4 \left(4 x^2-7 x+6\right)}{(x-2)^3 x^2} & \frac{4 \
\left(2 x^3-6 x^2+13 x-10\right)}{(x-2)^3 x^3}
\end{matrix}
\right)\partial
\\
&&+\left(
\begin{matrix}
 -\frac{8 \left(x^2-3 x+3\right)}{(x-2)^2 x^2} & \frac{4 (3 \
x-4)}{(x-2)^2 x^3} \cr
 \frac{4}{(x-2)^2} & -\frac{4 \left(x^2-2 x+2\right)}{(x-2)^2 \
x^2}
\end{matrix}
\right)\partial^2+\partial^4
\end{eqnarray*}
which satisfies
$
L\psi_{\V}=z^4\psi_{\V}$.

Of course, we can follow this same procedure beginning with another
element of $\sCM{2}{3}$.  In particular, if we begin with
$$
\V^{\b}=(Z^{\top},X^{\top},B^{\top},A^{\top})
$$
instead then the differential operator we produce will be
\begin{eqnarray*}
Q&=&\left(
\begin{matrix}
 -\frac{4 \left(16 x^2+65 x+90\right)}{x^6} & \frac{80 \
x+216}{x^6} \cr
 \frac{8 x+12}{x^4} & -\frac{4 \left(4 x^2+21 x+36\right)}{x^6}
\end{matrix}
\right)+\left(
\begin{matrix}
 \frac{4 \left(16 x^2+65 x+90\right)}{x^5} & -\frac{8 (10 \
x+27)}{x^5} \cr
 -\frac{4 (2 x+3)}{x^3} & \frac{4 \left(4 x^2+21 \
x+36\right)}{x^5}
\end{matrix}
\right)\partial\\&&+\left(
\begin{matrix}
 -\frac{2 \left(12 x^2+65 x+90\right)}{x^4} & \frac{40 \
x+108}{x^4} \cr
 \frac{6}{x^2} & -\frac{8 x^2+42 x+72}{x^4}
\end{matrix}
\right)\partial^2+\left(
\begin{matrix}
 \frac{34 x+60}{x^3} & -\frac{4 (2 x+9)}{x^3} \cr
 0 & \frac{14 x+24}{x^3}
\end{matrix}
\right)\partial^3\\&&+\left(
\begin{matrix}
 4-\frac{12}{x^2} & \frac{6}{x^2} \cr
 0 & 4-\frac{6}{x^2}
\end{matrix}
\right)\partial^4-4\partial^5+\partial^6.
\end{eqnarray*}
The function $\psi_{\V^{\b}}$ is an eigenfunction for this operator
satisfying $Q\psi_{\V^{\b}}(x,z)=(4z^4-4z^5+z^6)\psi_{\V^{\b}}(x,z)$.

More interestingly, since $\psi_{\V^{\b}}(x,z)=\psi_{\V}^{\top}(z,x)$,
if we transpose the matrix coefficients on $Q$, replace $x$ with $z$
and $\partial=\partial_x$ with $\partial_z$, we get a differential
operator $\Lambda$ in the variable $z$.  This operator applied to
$\psi_{\V}(x,z)$ (the wave function computed earlier) \textit{from the
  right} satisfies
$$
\Right{\Lambda}\psi_{\V}(x,z)=(4x^4-4x^5+x^6)\psi_{\V}(x,z),
$$
demonstrating bispectrality.

\section{Conclusions and Comments}

The main results of the present paper can be viewed as another step in
addressing the ``bispectral problem'' of F.A. Gr\"unbaum
\cite{MR1611018,MR826863}, seeking operators satisfying eigenvalue equations
of the form \eqref{bispdef1}.  In \cite{MR2148533}, the authors considered
the case in which one of the operators is a second order difference
operator with matrix coefficients and the two operators act on matrix
eigenfunctions from different directions.  However, bispectrality for
matrix differential operators has only been studied with both
operators acting from the left \cite{MR1067913}.  Here we consider the case
\eqref{bispdef2} in which the operators are $r\times r$ matrix
differential operators acting from different directions.  Since our
construction conveniently reproduces the results of Wilson's seminal
paper \cite{MR1234841} in the special case $r=1$, this particular formulation
of the bispectral problem appears to be the correct one for
generalizing those results to the case of matrix differential
operators.  However, the method of proof and especially the explicit
formulation of the ``conditions'' satisfied by the wave function above
are novel even for $r=1$.

In \cite{MR1234841}, it was shown that the bispectral operators associated to
$\sCM{1}{n}$ are in fact the \textit{only} bispectral scalar ordinary
differential operators which commute with operators of relatively
prime order up to obvious renormalizations and changes of
variable\footnote{These were called ``rank one'' operators in that
  context, but we will avoid that terminology here to avoid confusion
  with the rank $r$ which is something different.}.  By
Lemma~\ref{lem:nontrivring} it follows that the differential operators
produced by the construction in this paper also all have the property
that they commute with other differential operators of relatively
prime order.

In addition, this paper can be seen as contributing to the literature
establishing a link between bispectrality and duality in classical and
quantum integrable systems.  (See, for instance,
\cite{MR1780352,MR2280341,MR1350415,MR1843573,MR1862956,MR1626461}.)
Again, the main results of the present paper for the spin Calogero
system in the case $r=1$ reproduce results previously presented for
the spinless case in \cite{MR1350415,MR1626461}.

Some questions arise naturally which we have not pursued.  There are
additional commuting Hamiltonians for the spin Calogero system
\cite{MR761664} and corresponding isospectral deformations for the
multi-component KP hierarchy \cite{MR97c:58061}, but their relationship to
bispectrality has not been explored here.  We have not looked at the
algebro-geometric implications of the rings $\R_{\V}$.  Certainly as
in the case $r=1$ \cite{MR1234841,MR1234841}, these contain operators of relatively
prime order are isomorphic to the coordinate rings of rational curves
with only cuspidal singularities.  However, whether there is any
further algebro-geometric significance such as was found in
\cite{MR1904791} or whether every commutative ring of matrix ordinary
differential operators with these properties is necessarily bispectral
have not been considered.  These questions, along with the obvious
question of what other matrix differential operators satisfy equations
of the form \eqref{bispdef2} will hopefully be addressed in future
papers.

%  \bibliography{refs}

\begin{thebibliography}{FGNR00}

\bibitem[AMM77]{MR0649926}
H.~Airault, H.~P. McKean, and J.~Moser, \emph{Rational and elliptic solutions
  of the {K}orteweg-de {V}ries equation and a related many-body problem}, Comm.
  Pure Appl. Math. \textbf{30} (1977), no.~1, 95--148. \MR{MR0649926 (58
  \#31214)}

\bibitem[BtK95]{MR97c:58061}
M.~J. Bergvelt and A.~P.~E. ten Kroode, \emph{Partitions, vertex operator
  constructions and multi-component {K}{P} equations}, Pacific J. Math.
  \textbf{171} (1995), no.~1, 23--88. \MR{97c:58061}

\bibitem[BW02]{MR1904791}
Yuri Berest and George Wilson, \emph{Ideal classes of the {W}eyl algebra and
  noncommutative projective geometry}, Int. Math. Res. Not. (2002), no.~26,
  1347--1396, With an appendix by Michel Van den Bergh. \MR{MR1904791
  (2003h:16040)}

\bibitem[BZN07]{MR2377220}
David Ben-Zvi and Thomas Nevins, \emph{Flows of {C}alogero-{M}oser systems},
  Int. Math. Res. Not. IMRN (2007), no.~23, Art. ID rnm105, 38. \MR{MR2377220}

\bibitem[DG86]{MR826863}
J.~J. Duistermaat and F.~A. Gr{\"u}nbaum, \emph{Differential equations in the
  spectral parameter}, Comm. Math. Phys. \textbf{103} (1986), no.~2, 177--240.
  \MR{MR826863 (88j:58106)}

\bibitem[DG05]{MR2148533}
Antonio~J. Dur{\'a}n and F.~Alberto Gr{\"u}nbaum, \emph{Structural formulas for
  orthogonal matrix polynomials satisfying second-order differential equations.
  {I}}, Constr. Approx. \textbf{22} (2005), no.~2, 255--271. \MR{MR2148533
  (2006d:42045)}

\bibitem[DMH07]{MR2321659}
Aristophanes Dimakis and Folkert M{\"u}ller-Hoissen, \emph{With a {C}ole-{H}opf
  transformation to solutions of the noncommutative {KP} hierarchy in terms of
  {W}ronski matrices}, J. Phys. A \textbf{40} (2007), no.~17, F321--F329.
  \MR{MR2321659}

\bibitem[FGNR00]{MR1780352}
Vladimir Fock, Alexander Gorsky, Nikita Nekrasov, and Vladimir Rubtsov,
  \emph{Duality in integrable systems and gauge theories}, J. High Energy Phys.
  (2000), no.~7, Paper 28, 40. \MR{MR1780352 (2001j:81093)}

\bibitem[GH84]{MR761664}
John Gibbons and Theo Hermsen, \emph{A generalisation of the {C}alogero-{M}oser
  system}, Phys. D \textbf{11} (1984), no.~3, 337--348. \MR{MR761664
  (86e:58038)}

\bibitem[GK06]{MR2173898}
Michael Gekhtman and Alex Kasman, \emph{On {KP} generators and the geometry of
  the {HBDE}}, J. Geom. Phys. \textbf{56} (2006), no.~2, 282--309.
  \MR{MR2173898 (2006f:37097)}

\bibitem[Hai07]{MR2280341}
Luc Haine, \emph{K{P} trigonometric solitons and an adelic flag manifold},
  SIGMA Symmetry Integrability Geom. Methods Appl. \textbf{3} (2007), Paper
  015, 15 pp. (electronic). \MR{MR2280341}

\bibitem[HK98]{MR1611018}
John Harnad and Alex Kasman (eds.), \emph{The bispectral problem}, CRM
  Proceedings \& Lecture Notes, vol.~14, American Mathematical Society,
  Providence, RI, 1998, Papers from the CRM Workshop held at the Universit\'e
  de Montr\'eal, Montreal, PQ, March 1997. \MR{MR1611018 (98k:00011)}

\bibitem[Kas95]{MR1350415}
Alex Kasman, \emph{Bispectral {KP} solutions and linearization of
  {C}alogero-{M}oser particle systems}, Comm. Math. Phys. \textbf{172} (1995),
  no.~2, 427--448. \MR{MR1350415 (96f:58077)}

\bibitem[Kas00]{MR1843573}
\bysame, \emph{The bispectral involution as a linearizing map},
  Calogero-Moser-Sutherland models (Montr\'eal, QC, 1997), CRM Ser. Math.
  Phys., Springer, New York, 2000, pp.~221--229. \MR{MR1843573 (2002e:37111)}

\bibitem[Kas01]{MR1862956}
\bysame, \emph{On the quantization of a self-dual integrable system}, J. Phys.
  A \textbf{34} (2001), no.~32, 6307--6312. \MR{MR1862956 (2002m:81083)}


\bibitem[KKS78]{MR0478225}
D.~Kazhdan, B.~Kostant, and S.~Sternberg, \emph{Hamiltonian group actions and
  dynamical systems of {C}alogero type}, Comm. Pure Appl. Math. \textbf{31}
  (1978), no.~4, 481--507. \MR{MR0478225 (57 \#17711)}

\bibitem[Kri79]{MR557031}
I.M. Krichever, \emph{On the rational solutions of the {Z}aharov-\v
  {S}abat equations and completely integrable systems of {$N$} particles on the
  line}, Zap. Nauchn. Sem. Leningrad. Otdel. Mat. Inst. Steklov. (LOMI)
  \textbf{84} (1979), 117--130, 312, 318, Boundary value problems of
  mathematical physics and related questions in the theory of functions, 11.
  \MR{MR557031 (82i:35155)}


\bibitem[KBBT95]{MR1355552}
I.M.~Krichever, O.~Babelon, E.~Billey, and M.~Talon, \emph{Spin
  generalization of the {C}alogero-{M}oser system and the matrix {KP}
  equation}, Topics in topology and mathematical physics, Amer. Math. Soc.
  Transl. Ser. 2, vol. 170, Amer. Math. Soc., Providence, RI, 1995,
  pp.~83--119. \MR{MR1355552 (96k:58115)}
  
\bibitem[Pol06]{MR2277458}
Alexios~P. Polychronakos, \emph{The physics and mathematics of {C}alogero
  particles}, J. Phys. A \textbf{39} (2006), no.~41, 12793--12845.
  \MR{MR2277458 (2008b:81114)}

\bibitem[Shi94]{MR1299922}
Takahiro Shiota, \emph{Calogero-{M}oser hierarchy and {KP} hierarchy}, J. Math.
  Phys. \textbf{35} (1994), no.~11, 5844--5849. \MR{MR1299922 (95i:58095)}

\bibitem[SS83]{MR86m:58072}
Mikio Sato and Yasuko Sato, \emph{Soliton equations as dynamical systems on
  infinite-dimensional {G}rassmann manifold}, Nonlinear partial differential
  equations in applied science (Tokyo, 1982), North-Holland, Amsterdam, 1983,
  pp.~259--271. \MR{86m:58072}

\bibitem[SW85]{MR87b:58039}
Graeme Segal and George Wilson, \emph{Loop groups and equations of {K}d{V}
  type}, Inst. Hautes \'Etudes Sci. Publ. Math. (1985), no.~61, 5--65.
  \MR{87b:58039}

\bibitem[SZ01]{MR1857803}
Alexander Sakhnovich and Jorge~P. Zubelli, \emph{Bundle bispectrality for
  matrix differential equations}, Integral Equations Operator Theory
  \textbf{41} (2001), no.~4, 472--496. \MR{MR1857803 (2002i:34155)}

\bibitem[vDV00]{MR1843558}
Jan~Felipe van Diejen and Luc Vinet (eds.), \emph{Calogero-{M}oser-{S}utherland
  models}, CRM Series in Mathematical Physics, New York, Springer-Verlag, 2000.
  \MR{MR1843558 (2002b:37003)}

\bibitem[Wil93]{MR1234841}
George Wilson, \emph{Bispectral commutative ordinary differential operators},
  J. Reine Angew. Math. \textbf{442} (1993), 177--204. \MR{MR1234841
  (94m:58180)}

\bibitem[Wil98]{MR1626461}
\bysame, \emph{Collisions of {C}alogero-{M}oser particles and an adelic
  {G}rassmannian}, Invent. Math. \textbf{133} (1998), no.~1, 1--41, With an
  appendix by I. G. Macdonald. \MR{MR1626461 (99f:58107)}

\bibitem[Wil00]{MR1844234}
\bysame, \emph{The complex {C}alogero-{M}oser and {KP} systems},
  Calogero-Moser-Sutherland models (Montr\'eal, QC, 1997), CRM Ser. Math.
  Phys., Springer, New York, 2000, pp.~539--548. \MR{MR1844234 (2002d:37115)}

\bibitem[Zub90]{MR1067913}
Jorge~P. Zubelli, \emph{Differential equations in the spectral parameter for
  matrix differential operators}, Phys. D \textbf{43} (1990), no.~2-3,
  269--287. \MR{MR1067913 (91g:58217)}

\end{thebibliography}
%  \bibliographystyle{amsalpha}

%\def\cprime{$'$}
%\providecommand{\bysame}{\leavevmode\hbox to3em{\hrulefill}\thinspace}
%\providecommand{\MR}{\relax\ifhmode\unskip\space\fi MR }
%% \MRhref is called by the amsart/book/proc definition of \MR.
%\providecommand{\MRhref}[2]{%
%  \href{http://www.ams.org/mathscinet-getitem?mr=#1}{#2}
%}
%\providecommand{\href}[2]{#2}

\def\MR#1{\relax}
\par\medskip\noindent\textbf{Acknowledgments:} The second author was partially supported by
the NSF Grant DMS-0400484.  The third author appreciates helpful discussions with Tom Ivey, Folkert~M\"uller-Hoissen, and Oleg~Smirnov and the support of his department during his sabbatical.

\end{document}